\declaretheorem[refname={theorem,theorems},Refname={Theorem,Theorems}]{theorem}
\declaretheorem[sibling=theorem,name=Lemma]{lemma}
\declaretheorem[sibling=theorem,name=Proposition]{proposition}
\declaretheorem[sibling=theorem,name=Remark]{remark}
\declaretheorem[sibling=theorem,style=definition]{definition}
\declaretheorem[sibling=theorem,style=definition]{example}
\declaretheorem[sibling=theorem,name=Conjecture]{conjecture}
\newcommand{\infw}[1]{\mathbf{#1}}
\newcommand{\TT}[1]{\text{\tt{#1}}}
\newcommand{\Z}{\mathbb{Z}}
\DeclareMathOperator{\PPL}{PPL}
\author{Anna E. Frid, Enzo Laborde, Jarkko Peltom\"aki}
\title{On prefix palindromic length of automatic words}
\begin{document}
\maketitle

\begin{abstract}
The prefix palindromic length $\mathrm{PPL}_{\mathbf{u}}(n)$ of an infinite word $\mathbf{u}$ is the minimal number of concatenated palindromes needed to express the prefix of length $n$ of $\mathbf{u}$. Since 2013, it is still unknown if $\mathrm{PPL}_{\mathbf{u}}(n)$ is unbounded for every aperiodic infinite word $\mathbf{u}$, even though this has been proven for almost all aperiodic words. At the same time, the only well-known nontrivial infinite word for which the function $\mathrm{PPL}_{\mathbf{u}}(n)$ has been precisely computed is the Thue-Morse word $\mathbf{t}$. This word is $2$-automatic and, predictably, its function $\mathrm{PPL}_{\mathbf{t}}(n)$ is $2$-regular, but is this the case for all automatic words?

In this paper, we prove that this function is $k$-regular for every $k$-automatic word containing only a finite number of palindromes. For two such words, namely the paperfolding word and the Rudin-Shapiro word, we derive a formula for this function. Our computational experiments suggest that generally this is not true: for the period-doubling word, the prefix palindromic length does not look $2$-regular, and for the Fibonacci word, it does not look Fibonacci-regular. If proven, these results would give rare (if not first) examples of a natural function of an automatic word which is not regular.
\end{abstract}

\section{Introduction}
A \emph{palindrome} is a finite word $p=p[1]\dotsm p[n]$ such that $p[i]=p[n-i+1]$ for every $i$, like $level$ or $abba$. We consider decompositions, or factorizations, of a finite word as a
concatenation of palindromes. In particular, we are interested in the 
minimal number of palindromes needed for such a decomposition, which we call the \emph{palindromic length} of a word. For example, the palindromic length of $abbaba$ is $3$ since this word is not a concatenation of two palindromes, but $abbaba=(abba)(b)(a)=(a)(bb)(aba)$.


In this paper, we consider the palindromic length of prefixes of infinite words. This function of an infinite word $\infw{u} = \infw{u}[0] \dotsm \infw{u}[n] \dotsm$ is denoted by $\PPL_{\infw{u}}$: here $\PPL_{\infw{u}}(n)$ is defined as the palindromic length of the prefix $\infw{u}[0] \dots \infw{u}[n-1]$ of length $n$ of $\infw{u}$.

The following conjecture was first formulated, in slightly different terms, in a 2013 paper by Puzynina, Zamboni, and the first author \cite{fpz}.

\begin{conjecture}\label{c1}
  For every aperiodic word $\infw{u}$, the function $\PPL_{\infw{u}}(n)$ is unbounded.
\end{conjecture}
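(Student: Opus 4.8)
The plan is to argue by contradiction: assume that there is a constant $K$ with $\PPL_{\infw{u}}(n)\le K$ for all $n$, and aim to show that $\infw{u}$ is then eventually periodic. Writing $g(n)=\PPL_{\infw{u}}(n)$, one has $g(0)=0$ together with the recurrence $g(n)=1+\min\{\,g(m): 0\le m<n,\ \infw{u}[m]\dotsm\infw{u}[n-1]\text{ is a palindrome}\,\}$, and the easy upper bound $g(n+1)\le g(n)+1$ obtained by appending the one-letter palindrome $\infw{u}[n]$. For each $n$ I would fix an optimal decomposition $\infw{u}[0]\dotsm\infw{u}[n-1]=q^{(n)}_1\dotsm q^{(n)}_{g(n)}$ into at most $K$ palindromic blocks, and consider the partition of $\N$ into the level sets $L_i=g^{-1}(i)$ for $0\le i\le K$.

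The first reduction is to dispose of words with only finitely many distinct palindromic factors, for which the conjecture is immediate: if all palindromic factors have length at most $M$, then a product of at most $K$ of them has length at most $KM$, contradicting $g(n)\le K$ for $n>KM$. (This is exactly the regime of the paperfolding and Rudin--Shapiro words treated in this paper, where boundedness fails trivially.) Hence I may assume $\infw{u}$ has arbitrarily long palindromic factors, and the real content is to reconcile infinitely many arbitrarily long palindromic building blocks with a uniform bound on their number.

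The core of the argument must convert the bound $K$ into a genuine period. Since the longest block of $P_n=\infw{u}[0]\dotsm\infw{u}[n-1]$ has length at least $n/K$, the prefixes contain arbitrarily long palindromic factors, each flanked by at most $K-1$ further palindromic blocks; a pigeonhole over the finitely many block-indices $1,\dots,K$ and over the finitely many levels lets me select an infinite family of such long palindromes of a fixed combinatorial type. The tool I would then invoke is the classical border--period lemma for palindromes---a palindrome with a palindromic prefix longer than half its length has that prefix as a border and hence a correspondingly short period---together with the Fine--Wilf theorem to amalgamate the periods arising from nested or overlapping palindromes into a single global period of a suffix of $\infw{u}$. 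A compactness (subsequence) argument extracting a limiting decomposition structure would make the amalgamation uniform in $n$.

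The hard part, and the reason the conjecture has resisted proof since 2013, is precisely this amalgamation step: a uniform bound on the \emph{number} of blocks does not by itself pin the long palindromes to aligned positions, so the optimal decompositions of successive prefixes may \emph{drift}, and the periods extracted from different scales need not be compatible under Fine--Wilf. This is the feature distinguishing the prefix problem from its factor analogue---bounding the palindromic length of \emph{every} factor, which is already known to force periodicity---where one may slide a window freely to force alignment; for prefixes one sees only a single growing window anchored at the origin. Controlling this drift, i.e.\ showing that bounded prefix palindromic length forces the decompositions to stabilize enough for the extracted periods to coincide, is where I expect the genuine difficulty to lie.
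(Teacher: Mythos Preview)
The statement you are attempting is an open conjecture, not a theorem: the paper does not prove it, and indeed says explicitly that ``the full conjecture remains unsolved.'' So there is no ``paper's own proof'' to compare against, and your write-up is best read as a proof strategy with an honest admission that the decisive step is missing. In that respect you have correctly located the obstruction: the drift of optimal decompositions and the failure of Fine--Wilf to synchronize periods extracted at different scales is exactly why no one has succeeded.

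There is, however, a factual error in your final paragraph that undercuts your framing. You write that bounding the palindromic length of \emph{every} factor ``is already known to force periodicity,'' and contrast this with the prefix problem. This is false: the factor version is \emph{not} known. Saarela proved that the two versions are \emph{equivalent}---if one is open, so is the other---so you cannot appeal to the factor case as settled and then argue that the prefix case is harder because the window is anchored. Your suggested reduction (slide the window to force alignment) would, if it worked, already solve the full conjecture via Saarela's equivalence; the fact that it doesn't is part of why the problem is hard in both formulations.

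A smaller point: your first reduction (finitely many palindromic factors) is correct and matches the trivial Remark in the paper, but be aware that the partial results that \emph{are} known go well beyond this---the paper cites a proof for all $p$-power-free words and for all Sturmian words. Any serious attempt should engage with why those techniques stall rather than restarting from the easy case.
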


In fact, the paper \cite{fpz} contains two versions of the conjecture: one with the prefix palindromic length and the other with the palindromic length of any factor of $\infw{u}$. Saarela later proved the equivalence of these two statements \cite{saarela}.

In the same initial paper \cite{fpz}, the conjecture was proven for the case when $\infw{u}$ is $p$-power-free for some $p$, as well as for a more general case covering almost all aperiodic infinite words. Its proof for all Sturmian words required a special technique \cite{frid}. The full conjecture remains unsolved.

While upper bounds on the prefix palindromic length can be obtained by usual techniques \cite{akmp}, any lower bounds \cite{dlt,li2} or precise formulas for 
$\PPL_{\infw{u}}(n)$ are astonishingly difficult to obtain, except for the following trivial observation.

\begin{remark}
  If an infinite word $\infw{u}$ contains palindromes of length at most $K$, then $\PPL_{\infw{u}}(n)\geq n/K$ for all $n$.
\end{remark}

Up to our knowledge, the only nontrivial previously known infinite word whose prefix palindromic length has been found precisely \cite{tm} is the Thue-Morse word 
with its many beautiful properties \cite{a_sh_ubi}. This sequence is $2$-automatic, and so it was not surprising that its prefix palindromic length is $2$-regular and its first differences are $2$-automatic. Although the prefix palindromic length does not fall into the class of functions of $k$-automatic words which are known to always be $k$-regular \cite{crs}, we are not aware of any natural functions which would not have this property.

In this paper, we explore the limits of the method used for the Thue-Morse word by considering other automatic words. We prove that $\PPL(n)$ is $k$-regular for every $k$-automatic word containing a finite number of distinct palindromes and find this function for the paperfolding word and the Rudin-Shapiro word. At the same time, we also give computational results allowing to conjecture that for the period-doubling word, which contains infinitely many palindromes, the prefix palindromic length is \emph{not} $2$-regular, and for the Fibonacci word, it is not Fibonacci-regular. At the very least, if a regularity exists, it must be very complicated. If in at least one of these examples the function will be proven to be not regular, it would give a first example of a reasonable easily defined function of an automatic word which is not regular.

\section{Automatic words}
Throughout this paper, we use the notation $u[i..j] = u[i]\ldots u[j]$ for a factor of a finite or infinite word $u$ 
starting at position $i$ and ending at $j$. Note that for technical reasons, we start numbering symbols of finite words with $1$ and of infinite words with $0$.

\begin{definition}
  Let $\infw{u}$ be an infinite word. Then we define the \emph{PPL-difference sequence} $d_{\infw{u}}$ of $\infw{u}$ by setting $d_{\infw{u}}(n) = \PPL_{\infw{u}}(n + 1) - \PPL_{\infw{u}}(n)$ for $n \geq 0$. Notice that we always have $\PPL_{\infw{u}}(1) = 1$, and setting $\PPL_{\infw{u}}(0) = 0$ by convention, we get $d_{\infw{u}}(0) = 1$.
\end{definition}

The following lemma was first proven in the 2015 conference version of the paper \cite{shur1} and then generalised by Saarela \cite[Lemma~6]{saarela}.

\begin{lemma}\label{l:difference_bounds}
  For every word $\infw{u}$ and for every $n\geq 0$, we have
  \begin{equation*}
    \PPL_{\infw{u}}(n)-1\leq \PPL_{\infw{u}}(n+1) \leq \PPL_{\infw{u}}(n)+1.
  \end{equation*}
\end{lemma}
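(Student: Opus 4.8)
The plan is to establish the two inequalities separately, each by directly manipulating an optimal palindromic decomposition of one of the two prefixes. Write $w_n = \infw{u}[0]\dotsm\infw{u}[n-1]$ for the prefix of length $n$, so that $w_{n+1} = w_n\,\infw{u}[n]$ is obtained from $w_n$ by appending a single letter. Recall that every single letter is a palindrome, and that $\PPL_{\infw{u}}(n)$ is by definition the palindromic length of $w_n$.

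For the upper bound $\PPL_{\infw{u}}(n+1)\leq\PPL_{\infw{u}}(n)+1$, I would take an optimal decomposition of $w_n$ into $\PPL_{\infw{u}}(n)$ palindromes and simply append the one-letter palindrome $\infw{u}[n]$. This yields a decomposition of $w_{n+1}$ into $\PPL_{\infw{u}}(n)+1$ palindromes, and the inequality follows at once from minimality.

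For the lower bound $\PPL_{\infw{u}}(n)\leq\PPL_{\infw{u}}(n+1)+1$, which is the rearranged form of the left-hand inequality $\PPL_{\infw{u}}(n)-1\leq\PPL_{\infw{u}}(n+1)$, I would start from an optimal decomposition $w_{n+1}=p_1\dotsm p_m$ with $m=\PPL_{\infw{u}}(n+1)$ and focus on the last block $p_m$, which ends --- and hence, being a palindrome, also begins --- with the letter $a=\infw{u}[n]$. Deleting the last letter of $w_{n+1}$ amounts to deleting the final letter of $p_m$. The key observation is that if $p_m=a\,r\,a$ with $r$ the necessarily palindromic middle part, then $p_m$ with its last letter removed equals $a\,r$, which splits into the two palindromes $a$ and $r$. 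Thus $w_n=p_1\dotsm p_{m-1}\,a\,r$ is a decomposition into at most $m+1$ palindromes, whence $\PPL_{\infw{u}}(n)\leq m+1=\PPL_{\infw{u}}(n+1)+1$.

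There is no genuinely hard step here; the only point requiring a little care is the degenerate case in the lower bound where $p_m$ has length $1$. In that case deleting its last letter empties the block, and $w_n=p_1\dotsm p_{m-1}$ is already a decomposition into $m-1$ palindromes, which only strengthens the conclusion. The boundary case $n=0$, where $w_0$ is empty and $\PPL_{\infw{u}}(0)=0$ by convention, is verified directly against $\PPL_{\infw{u}}(1)=1$. Combining the two inequalities gives the stated chain, equivalently $|d_{\infw{u}}(n)|\leq 1$.
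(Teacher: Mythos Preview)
Your argument is correct. Both inequalities are handled cleanly: appending the single-letter palindrome for the upper bound, and for the lower bound stripping the last letter from the final block $p_m$ of an optimal decomposition of $w_{n+1}$ and writing what remains as $a\cdot r$. The degenerate cases $|p_m|=1$ (where the block simply disappears) and $|p_m|=2$ (where $r$ is empty, so $a\cdot r$ contributes a single palindrome rather than two) are covered, and the convention $\PPL_{\infw{u}}(0)=0$ handles $n=0$.

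As for comparison with the paper: the paper does not actually prove this lemma. It attributes the result to the 2015 conference version of \cite{shur1} and to Saarela \cite[Lemma~6]{saarela}, and simply states it. Your proof is the standard elementary argument one would expect, and is essentially what appears in those references.
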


Therefore a PPL-difference sequence can only take the values $-1$, $0$, or $1$. 
We prefer to use the alphabet $\{\TT{-}, \TT{0}, \TT{+}\}$ in place of $\{-1,0,1\}$. So, the PPL-difference can be considered as an infinite word over a three-letter alphabet; in particular, this word is useful for many algorithms related to palindromes \cite{shur2}.

As the name suggests, a word $\infw{u} = \infw{u}[0] \dotsm \infw{u}[n] \dotsm$ is called \emph{$k$-automatic} if there exists a deterministic finite automaton $A$ such that every symbol $\infw{u}[n]$ of $\infw{u}$ can be obtained as the output of $A$ with  the base-$k$ representation of $n$ as the input. For the technical details of this definition and for  basic examples, we refer the reader to \cite{a_sh}.  In this paper, we mostly do not use this definition but several equivalent ones. To introduce them, we need more notions.

\begin{definition}
 A \emph{morphism} $\varphi\colon \Sigma^* \to \Delta^*$ is a map satisfying $\varphi(xy)=\varphi(x)\varphi(y)$ for all words $x,y \in \Sigma^*$. Clearly, a morphism is uniquely determined by images of symbols of $\Sigma$ and can be naturally extended to the set of infinite words over $\Sigma$. If there exists a $k$ such that all images of symbols are of length $k$, the morphism is called \emph{$k$-uniform}; a $1$-uniform morphism is called a \emph{coding}.
 
 If for some morphism $\varphi\colon \Sigma^* \to \Sigma^*$ and for a letter $a\in \Sigma$ the image $\varphi(a)$ starts with $a$, then there exists at least one finite or infinite word $u$ starting with $a$ which is a fixed point of $\varphi$, that is, it satisfies the equation $u=\varphi(u)$. If in addition $\varphi$ is $k$-uniform for $k \geq 2$, the fixed point starting with $a$ is unique and is denoted as $\varphi^{\omega}(a)$.
\end{definition}

The following statement is a combination of two results. The case when $\psi$ is a coding is Cobham's theorem \cite{cobham72}, which can also be found in the monograph of Allouche and Shallit \cite{a_sh} as Theorem 6.3.2. The case when $\psi$ is a $m$-uniform morphism for $m>1$ is a combination of Cobham's theorem and Corollary 6.8.3 of the same monograph.

\begin{theorem}\label{t:cobham}
  An infinite word $\infw{u}$ is $k$-automatic if and only if $\infw{u}=\psi(\varphi^\omega (a))$ for some $k$-uniform morphism $\varphi$ and a uniform morphism $\psi$. Moreover, the morphisms can always be chosen so that $\psi$ is a coding.
\end{theorem}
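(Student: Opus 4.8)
The plan is to prove the two directions of the equivalence separately, treating the coding case of $\psi$ as the core and then reducing the general uniform case to it.

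For the ``if'' direction, suppose first that $\psi$ is a coding and set $\infw{v} = \varphi^\omega(a)$, the fixed point of the $k$-uniform morphism $\varphi\colon \Sigma^* \to \Sigma^*$. I would exhibit a deterministic automaton with outputs that reads the base-$k$ representation of $n$ and produces $\infw{u}[n]$. Its states are the letters of $\Sigma$, the initial state is $a$, and on reading a digit $i \in \{0, \dots, k-1\}$ from a state $b$ the automaton moves to the $(i{+}1)$-st letter of $\varphi(b)$ (well defined since $\varphi$ is $k$-uniform); the output attached to a state $b$ is $\psi(b)$. Correctness rests on the block identity $\infw{v}[jk..jk+k-1] = \varphi(\infw{v}[j])$, from which one shows by induction on the number of base-$k$ digits that the state reached after reading the digits of $n$ (most significant first) is exactly $\infw{v}[n]$. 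Reading a leading $0$ fixes the initial state $a$ because $\varphi(a)$ begins with $a$, so leading zeros cause no trouble; this is the one point requiring care.

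To handle an arbitrary $m$-uniform $\psi$ in the ``if'' direction, I would note that $\infw{v} = \varphi^\omega(a)$ is $k$-automatic by the coding case, and that applying $\psi$ merely reblocks $\infw{v}$ into blocks of length $m$, so that $\infw{u}[mi+r] = \psi(\infw{v}[i])[r]$ for $0 \le r < m$. Both $r = n \bmod m$ and the base-$k$ digits of $i = \lfloor n/m \rfloor$ are finite-state functions of the base-$k$ digits of $n$: tracking the residue modulo $m$ and dividing by the constant $m$ are finite-state operations regardless of the base. Composing the resulting base-$k$ transducer with the automaton for $\infw{v}$ and the output rule $(\,\cdot\,, r) \mapsto \psi(\cdot)[r]$ yields a base-$k$ automaton for $\infw{u}$, which is exactly the content of the cited closure result. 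Hence $\infw{u}$ is $k$-automatic for every uniform $\psi$.

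For the ``only if'' direction, assume $\infw{u}$ is $k$-automatic, realized by an automaton with state set $Q$, initial state $q_0$, transition function $\delta$, and output $\sigma\colon Q \to \Delta$. After normalizing so that reading a leading $0$ from $q_0$ returns to $q_0$, I would define the $k$-uniform morphism $\varphi\colon Q^* \to Q^*$ by $\varphi(q) = \delta(q,0)\,\delta(q,1)\dotsm\delta(q,k-1)$ and take $\psi = \sigma$, which is a coding. The same digit-by-digit induction as above gives $\varphi^\omega(q_0)[n] = \delta(q_0, (n)_k)$, whence $\sigma(\varphi^\omega(q_0))[n] = \sigma(\delta(q_0,(n)_k)) = \infw{u}[n]$. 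Thus $\infw{u} = \psi(\varphi^\omega(q_0))$ with $\psi$ a coding, which proves the equivalence and the ``moreover'' clause at once. The main obstacle is not conceptual but bookkeeping: making the correspondence between base-$k$ representations and positions in the fixed point precise, in particular fixing a reading direction and disposing of leading zeros, and—in the non-coding case—verifying that reblocking by a constant factor is genuinely a finite-state operation in base $k$. Once these points are pinned down, both directions follow from the single block identity for fixed points of $k$-uniform morphisms.
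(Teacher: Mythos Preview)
Your proof sketch is correct and follows the standard route to Cobham's theorem: build a DFAO whose states are the letters of $\Sigma$ with transitions read off from $\varphi$, and conversely turn a DFAO into a $k$-uniform morphism on its state set; the extension to an $m$-uniform $\psi$ via the finite-state computability of $n\bmod m$ and $\lfloor n/m\rfloor$ in base $k$ is also sound.

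The paper, however, does not prove this statement at all. It is presented as a citation: the coding case is attributed to Cobham and to Theorem~6.3.2 of Allouche--Shallit, and the general uniform case is obtained by combining this with Corollary~6.8.3 of the same monograph. So there is no ``paper's own proof'' to compare against; you have supplied a self-contained argument where the authors simply invoke the literature. Your treatment of the $m$-uniform $\psi$ is slightly more hands-on than the paper's appeal to a black-box closure result, but the content is the same.
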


If $\infw{u} = \psi(\varphi^\omega(a))$, where the morphisms $\varphi$ and $\psi$ are not obliged to be uniform, the word $\infw{u}$ is called \emph{morphic}. As the previous theorem shows, the class of morphic words includes all $k$-automatic words for all $k$.

\begin{definition}
  The \emph{$k$-kernel} $\ker_k(\infw{u})$ of an infinite word $\infw{u} = \infw{u}[0] \dotsm \infw{u}[n] \dotsm$ is the set of arithmetic subsequences of $\infw{u}$ with differences of the form $k^e$ and starting positions inferior to the difference:
  \begin{equation*}
    \ker_k(\infw{u})=  \{(\infw{u}[k^e n + b])_{n\geq0} : e \geq 0, 0 \leq b < k^e\}.
  \end{equation*}
 \end{definition}

 An infinite word $\infw{u}$ is $k$-automatic if and only if $\ker_k(\infw{u})$ is finite \cite[Thm.~6.6.2]{a_sh}.

In what follows, we will need and use the equivalent definitions of a $k$-automatic words based on uniform morphisms and on the $k$-kernel.

\begin{example}\label{e:tm}
  For the Thue-Morse word $\infw{t}=0110100110010110\dotsm$, which is $2$-automatic, the three definitions work as follows:
 \begin{itemize}
  \item The definition involving the automaton: the symbol $\infw{t}[n]$, $n=0,1,\ldots$, is $0$ if the number of $1$'s in the binary representation of $n$ is even and $1$ if it is odd.
  \item The definition involving morphisms: $\infw{t}=\sigma^{\omega}(0)$, where $\sigma(0)=01$ and $\sigma(1)=10$; the coding $\psi$ from the formula $\infw{t}=\psi(\sigma^{\omega}(0))$ here is trivial ($\psi(0)=0$, $\psi(1)=1$) and can be omitted.
  \item The definition involving the $2$-kernel: $\infw{t}$ can be described as \emph{the} word starting with $01$ and obtained by alternating the symbols of $\infw{t}$ and of the word $\overline{\infw{t}}=1001\dotsm $ obtained from $\infw{t}$ by exchanging $0$'s and $1$'s. It is not difficult to see that the $2$-kernel of $\infw{t}$ contains only two elements: $\infw{t}$ and  $\overline{\infw{t}}$.
 \end{itemize}

\end{example}

The following definition is closely related to automatic words.

\begin{definition}
  A $\Z$-valued sequence is \emph{$k$-regular} if the $\Z$-module generated by its $k$-kernel is finitely generated.
\end{definition}

This definition implies in particular that $k$-automatic sequences are $k$-regular (we may always assume that a word is over an integer alphabet). A sequence is $k$-automatic if and only if it is a bounded $k$-regular sequence \cite[Thm.~16.1.5]{a_sh}.

Many sequences related to $k$-automatic sequences are $k$-regular, as it follows from an important decidability result by Charlier, Rampersad, and Shallit \cite{crs}. In particular, this is true for the function of factor complexity defined as the number of factors of  length $n$ of the word for each $n$ and for the number of distinct palindromes of length $n$ in the word. In fact, the latter function is even $k$-automatic since it is bounded \cite{abcd}. Thus it is natural to ask if the sequence $\PPL_{\infw{u}}$ is $k$-regular when $\infw{u}$ is $k$-automatic. The next lemma shows that in order to study this question, it suffices to study the $\PPL$-difference sequence.

\begin{lemma}
  Let $\infw{u}$ be an infinite word. Then the sequence $\PPL_{\infw{u}}$ is $k$-regular if and only if the PPL-difference sequence $d_{\infw{u}}$ is $k$-automatic.
\end{lemma}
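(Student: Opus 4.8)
The plan is to reduce the statement to a pair of standard closure properties of the class of $k$-regular sequences, using the equivalence recalled above that a sequence is $k$-automatic precisely when it is bounded and $k$-regular. The starting observation is the telescoping identity linking the two sequences: from the convention $\PPL_{\infw{u}}(0)=0$ and the definition $d_{\infw{u}}(n)=\PPL_{\infw{u}}(n+1)-\PPL_{\infw{u}}(n)$ one obtains $\PPL_{\infw{u}}(n)=\sum_{i=0}^{n-1} d_{\infw{u}}(i)$, so that $\PPL_{\infw{u}}$ is a shift of the summatory sequence of $d_{\infw{u}}$, while conversely $d_{\infw{u}}$ is the first-difference sequence of $\PPL_{\infw{u}}$. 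Moreover, by Lemma~\ref{l:difference_bounds} the sequence $d_{\infw{u}}$ takes only the values $-1,0,1$ and is therefore bounded; hence $d_{\infw{u}}$ is $k$-automatic if and only if it is $k$-regular. Thus the lemma reduces to showing that $\PPL_{\infw{u}}$ is $k$-regular if and only if its first-difference sequence $d_{\infw{u}}$ is $k$-regular.

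For the forward direction I would suppose $\PPL_{\infw{u}}$ is $k$-regular. The class of $k$-regular sequences is a $\Z$-module and is closed under the shift $a\mapsto (a(n+1))_n$; therefore both $(\PPL_{\infw{u}}(n+1))_n$ and $(\PPL_{\infw{u}}(n))_n$ are $k$-regular, and so is their termwise difference $d_{\infw{u}}$. Being bounded, $d_{\infw{u}}$ is then $k$-automatic.

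For the converse I would suppose $d_{\infw{u}}$ is $k$-automatic, hence $k$-regular, and invoke the closure of the $k$-regular sequences under the partial-sum operator: if $(a(n))_n$ is $k$-regular, then $(\sum_{i=0}^{n} a(i))_n$ is $k$-regular as well. Applying this to $a=d_{\infw{u}}$ and shifting gives that $\PPL_{\infw{u}}(n)=\sum_{i=0}^{n-1} d_{\infw{u}}(i)$ is $k$-regular. This summation-closure is the crux of the argument and the step I expect to be the main obstacle: unlike the shift and module closures, it is not immediate from the definition. Proving it requires splitting each term $\sum_{i=0}^{k^e n+b} a(i)$ of the $k$-kernel of the summatory sequence into residue classes modulo $k^e$ and expressing the resulting partial sums of decimations of $a$ through the finitely many generators of the module generated by $\ker_k(a)$. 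I would either reproduce this decimation argument or, more economically, cite it as a known closure property of $k$-regular sequences from \cite{a_sh}. Combining the two directions yields the stated equivalence.
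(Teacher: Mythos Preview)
Your proposal is correct and follows essentially the same route as the paper: both directions are obtained from standard closure properties of $k$-regular sequences (shift and termwise difference for one direction, running sums for the other), together with the fact that bounded $k$-regular sequences are $k$-automatic and \autoref{l:difference_bounds}. You are in fact more explicit than the paper, which bundles the needed closures into a single citation of \cite[Thm.~16.2.1, Thm.~16.2.5]{a_sh}; your identification of the partial-sum closure as the only nontrivial ingredient is accurate, and citing it from \cite{a_sh} (it is Thm.~16.2.5 there) is exactly what the paper does.
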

\begin{proof}
  The set of $k$-regular sequences over $\Z$ is closed under componentwise shift, sum, and difference \cite[Thm.~16.2.1, Thm.~16.2.5]{a_sh}. Therefore $\PPL_{\infw{u}}$ is $k$-regular if and only if $d_{\infw{u}}$ is $k$-regular. By \autoref{l:difference_bounds}, the sequence $d_{\infw{u}}$ is bounded. The conclusion follows from the above-cited fact that a bounded $k$-regular sequence is $k$-automatic \cite[Thm.~16.1.5]{a_sh}.
\end{proof}

The first author studied in \cite{dlt,tm} the PPL-difference sequence $d_{\infw{t}}$ of the Thue-Morse word $\infw{t}$ from \autoref{e:tm} and characterized it as the fixed point of the following $4$-uniform morphism:
\begin{equation*}
  \begin{cases}
    \TT{+} \mapsto \TT{++0-}, \\
    \TT{0} \mapsto \TT{++--}, \\
    \TT{-} \mapsto \TT{+0--}.
  \end{cases}
\end{equation*}

This means in particular that $d_{\infw{t}}$ is $4$-automatic and thus $2$-automatic \cite[Thm.~6.6.4]{a_sh}. Hence $\PPL_{\infw{t}}$ is $2$-regular. This result is so far the only one that completely determines the functions $\PPL_{\infw{u}}$ and $d_{\infw{u}}$ for any nontrivial infinite word $\infw{u}$.

Notice that the result on the Thue-Morse word is not covered by the main result of this paper, since the Thue-Morse word contains infinitely many palindromes. Every prefix of length $4^n$ of the Thue-Morse word is a palindrome, so \autoref{t:main} below does not apply to it.

\section{Automatic first differences}
The following theorem is the main result of this paper.

\begin{theorem}\label{t:main}
  If a $k$-automatic word $\infw{u}$ contains a finite number of distinct palindromes, then the PPL-difference sequence $d_{\infw{u}}$ is $k$-automatic. 
\end{theorem}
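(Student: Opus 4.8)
The plan is to realize $d_{\infw{u}}$ as the output of a finite-state machine that reads $\infw{u}$ letter by letter, and then to invoke closure of $k$-automatic sequences under such transductions. Since $d_{\infw{u}}$ takes only values in $\{\TT{-},\TT{0},\TT{+}\}$ by \autoref{l:difference_bounds}, it is bounded, so it suffices to show that it is $k$-automatic directly, equivalently that its $k$-kernel is finite. The essential consequence of the hypothesis is that, because $\infw{u}$ has finitely many distinct palindromes, there is a bound $K$ on the length of any palindromic factor of $\infw{u}$. Inspecting the last block of an optimal palindromic decomposition of a prefix yields the \emph{bounded-lookback recursion}
\[
\PPL_{\infw{u}}(n) = 1 + \min\{\PPL_{\infw{u}}(n-\ell) : 1 \le \ell \le K,\ \infw{u}[n-\ell..n-1]\ \text{is a palindrome}\},
\]
valid for $n \ge 1$ (the set is nonempty, since single letters are palindromes), and this is the engine of the whole argument.

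Next I would encode the evaluation of this recursion in a finite state. For a position $n$, let the state record two bounded pieces of data: the window $\infw{u}[n-K..n-1]$ of the last (up to) $K$ letters, and the normalized profile $(\PPL_{\infw{u}}(n-1)-\PPL_{\infw{u}}(n),\dots,\PPL_{\infw{u}}(n-K+1)-\PPL_{\infw{u}}(n))$ of the last $K$ values of $\PPL_{\infw{u}}$. The window takes finitely many values, being a factor of $\infw{u}$, and the profile is bounded because consecutive values of $\PPL_{\infw{u}}$ differ by at most $1$ by \autoref{l:difference_bounds}, so each entry lies in $\{-K,\dots,K\}$; hence the state space is finite. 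Reading $\infw{u}[n]$ shifts the window, determines which suffixes of the new window are palindromes (a function of the window alone), and through the recursion computes $d_{\infw{u}}(n) = \PPL_{\infw{u}}(n+1)-\PPL_{\infw{u}}(n)$ from the profile and these palindrome indicators; the profile is then shifted and renormalized. Thus there is a deterministic automaton with output that, upon reading the prefix of length $n+1$, emits exactly $d_{\infw{u}}(n)$.

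It remains to argue that the output of such a machine on a $k$-automatic input is $k$-automatic; equivalently, that the sequence of states visited by a deterministic finite automaton $M$ reading $\infw{u}$ is $k$-automatic, after which $d_{\infw{u}}$ is a coding of that state sequence. To this end I would use \autoref{t:cobham} to write $\infw{u} = \psi(\varphi^\omega(a))$ with $\varphi$ $k$-uniform and $\psi$ a coding, so that running $M$ on $\infw{u}$ coincides with running on $\infw{w} = \varphi^\omega(a)$ the automaton $M'$ whose transition on a letter $c$ is the $M$-transition for $\psi(c)$. Working in the finite transition monoid $T$ of $M'$, the state at position $n$ is a coding of the prefix product $t_n \in T$ of the transformations induced by $\infw{w}[0],\dots,\infw{w}[n-1]$. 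Using $\infw{w} = \varphi(\infw{w})$ to regroup letters into blocks $\varphi^e(\infw{w}[i])$, one checks that the family of sequences $n \mapsto h(\infw{w}[n])\cdot P^{\beta}_n$, indexed by the finitely many maps $h,\beta\colon \Sigma \to T$ (where $P^{\beta}_n$ is the prefix product of $\infw{w}[0..n-1]$ under $\beta$), is closed under passing to the kernel subsequences $n \mapsto k^e n + b$; since $(t_n)$ lies in this finite family, its $k$-kernel is finite, and so $(t_n)$, hence $d_{\infw{u}}$, is $k$-automatic.

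The conceptual heart is the bounded-lookback recursion, which converts a global quantity into a finite-state computation; granting that, finiteness of the state space is immediate from \autoref{l:difference_bounds}. The step I expect to demand the most care is the last one: showing cleanly that feeding a $k$-automatic word into a finite automaton produces a $k$-automatic state sequence. This is a closure property of $k$-automatic sequences under finite-state transductions, which may be cited or established via the transition-monoid $k$-kernel computation sketched above; getting the bookkeeping of composition order and block regrouping exactly right is the main technical obstacle, and is where I would concentrate the effort.
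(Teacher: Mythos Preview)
Your proposal is correct and follows essentially the same strategy as the paper: the bounded-lookback recursion (the paper's \autoref{p:ddd}) shows that $d_{\infw{u}}$ is computable by a finite-state machine reading $\infw{u}$, and closure of $k$-automatic sequences under finite-state transduction then gives the result. The only differences are packaging: the paper groups $\infw{u}$ into $\Lambda$-blocks of length $k^\ell>p$ and invokes Cobham's theorem on uniform transductions, whereas you run a letter-by-letter Mealy machine and sketch the closure directly via the transition monoid and a $k$-kernel computation.
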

\begin{proof}
  Let $p$ be the length of the longest palindrome in $\infw{u}$. Then for every index $n$, the last palindrome in an optimal decomposition of $\infw{u}[0..n-1]$ as a product of palindromes starts at one of the positions $\infw{u}[n-p]$, $\ldots$, $\infw{u}[n-1]$. Thus $\PPL_{\infw{u}}(n)$ is determined by $\PPL(n-p)$, $\ldots$, $\PPL(n-1)$ and the word $\infw{u}[n-p..n-1]$ (we will often omit the subscripts in proofs to improve readability). 
  This simple consideration is a base for the following proposition.
 
 \begin{proposition}\label{p:ddd}
   For every $n$ such that $n\geq m+p$, the number $\PPL_{\infw{u}}(n)$ is uniquely determined by the numbers $\PPL_{\infw{u}}(m)$, $d_{\infw{u}}(m)$, $d_{\infw{u}}(m+1)$, $\ldots$, $d_{\infw{u}}(m+p-1)$, and the word $\infw{u}[m..n-1]$. The number $d_{\infw{u}}(n)$ is uniquely determined by $d_{\infw{u}}(m)$, $d_{\infw{u}}(m+1)$, $\ldots$, $d_{\infw{u}}(m+p-1)$, and the word $\infw{u}[m..n]$.
 \end{proposition}
 \begin{proof}
   Let us prove the first statement. Clearly, for every $i$ such that $i\leq p$, we have
   \begin{equation*}
     \PPL(m+i)=\PPL(m) + d(m) + d(m+1) + \dotsm + d(m+i-1),
   \end{equation*}
   so that $\PPL(m+1)$, $\ldots$, $\PPL(m+p)$ can be reconstructed from $\PPL(m)$, $d(m)$, $d(m+1)$, $\ldots$, $d(m+p-1)$. Now let us proceed by induction on $n\geq m+p$. The preceding computation establishes the base case. Since there are no palindromes in $\infw{u}$ of length greater than $p$, we have
   \begin{equation}\label{e:p12}
     \PPL(n) = \min\{\PPL(n-k)+1 : k = 1,\ldots,p, \text{$\infw{u}[n-k..n-1]$ is a palindrome}\}.
   \end{equation}
   The numbers $\PPL(n-k)+1$ are determined by $\PPL(m)$, $d(m)$, $\ldots$, $d(m+p-1)$, and $\infw{u}[n-p..n-1]$ by hypothesis. The induction step is complete.

   To prove the second statement, we replace $\PPL(m)$ in the previous paragraph by a parameter $P$ and let $\PPL(m+i)-P=D(i)$ for all $i\geq 0$, so that $D(i)=d(m)+d(m+1)+ \dotsm + d(m+i-1)$. Then for $i\leq p$, the number $D(i)$ can be found directly as the sum of the known values of the sequence $d$. Now for $n \geq m+p$, that is, for $i=n-m \geq p$, suppose that the values of $D(j)$ are known for all $j<i$. The claim is true for $i=p$, that is, for $n=m+p$, establishing the base case. For the induction step, it suffices to rewrite \eqref{e:p12} as
   \begin{equation*}
     P+D(i) = \min\{P+D(i-k)+1 : k=1,\ldots,p, \text{$\infw{u}[n-k..n-1]$ is a palindrome}\}
   \end{equation*}
   and to subtract $P$ to obtain $D(i)$ as a function of the previous values of $D$ and the word $\infw{u}[n-p..n-1]$:
   \begin{equation*}
     D(i) = \min\{D(i-k)+1 : k=1,\ldots,p, \text{$\infw{u}[n-k..n-1]$ is a palindrome}\}.
   \end{equation*}
   Now it remains to combine this expression for $i$ and for $i+1$ and to use the formula $d(n)=D(i+1)-D(i)$ to obtain the needed statement.
 \end{proof}

  By \autoref{t:cobham}, we may suppose that $\infw{u}=\psi(\varphi^{\omega}(a))$, where $\psi\colon \Sigma \to \Delta$ is a coding and $\varphi\colon \Sigma \to \Sigma^{k}$ is a $k$-uniform morphism over an alphabet $\Sigma$. Without loss of generality, by passing from $\varphi$ to a power of $\varphi$ if necessary, we may assume that $p<k$. Let
  \begin{equation*}
    \Lambda = \{\psi(\varphi(a)) : a \in \Sigma\}.
  \end{equation*}
  The word $\infw{u}$ is a concatenation of these $\Lambda$-blocks of length $k$, and we consider $\infw{u}$ as $\infw{u} = U[0] \dotsm U[N] \dotsm$ with $U[i] \in \Lambda$.
 
  Consider an occurrence $\infw{u}[m..n]$, where $n \geq m+p$, of a factor $v$ of $\infw{u}$. We define the \emph{type} of this occurrence as the sequence $d_{\infw{u}}[m..m+p-1]$. Clearly, for each word $v$, its occurrences have at most $3^p$ different types; we denote the set of possible types of $v$ by $T(v)$. Notice that the words $U[0]$, $U[1]$, $\ldots$ have types because their lengths are greater than $p$.
 
  The following proposition is a direct corollary of \autoref{p:ddd}

  \begin{proposition}\label{p:dddd}
    For every $N>0$, the type of the occurrence $U[N]$ is determined by the word $U[N]$, the word $U[N-1]$, and the type of $U[N-1]$.
  \end{proposition}
 
  This proposition can be interpreted as follows: given a word $U[0] \dotsm U[N] \dotsm$ and the type of $U[0]$, we can uniquely determine the types of $U[1]$, $U[2]$ and so on, and thus, due to \autoref{p:ddd}, find the PPL-difference sequence $d$. The process can be described by a transducer with 
  \begin{itemize}
    \item set of states $\{(A,t) : A\in \Lambda, t \in T(A))\} \cup \{S\}$, where $S$ is a special starting state;
    \item input alphabet $\Lambda$;
    \item output set $\{\TT{-}, \TT{0}, \TT{+}\}^k$; and
    \item set of transitions defined as follows:
          \begin{itemize}
            \item The starting transition marked as $U[0] | d_{\infw{u}}[0..k-1]$ goes from $S$ to the state $(U[0], d[0..p-1])$;
            \item A state $(A,t)$ is linked to a state $(B,t')$ by a transition marked as $B | w$ if a $\Lambda$-block $A$ of type $t$ is followed by a $\Lambda$-block $B$ of type $t'$ in $\infw{u}$ and the respective block of length $k$ in $d$ is $w$ (meaning in particular that $t'$ is a prefix of $w$).
          \end{itemize}
  \end{itemize}
  The transitions are well defined due to Propositions \ref{p:ddd} and \ref{p:dddd}, and the number of states is finite as $\# \Lambda \leq \# \Sigma$ and each word in $\Lambda$ has at most $3^p$ types. It is evident that the transducer describes the construction of $d$ from the $\Lambda$-blocks of $\infw{u}$.

  Since the sequence of $\Lambda$-blocks of $\infw{u}$ is $k$-automatic by the construction, we see that the sequence $d$ is obtained by feeding it to a uniform transducer (a uniform transducer outputs only words of common length). By a theorem of Cobham \cite{cobham72} (see also \cite[Thm.~6.9.2]{a_sh} and the discussion preceding it), a uniform transduction of a $k$-automatic sequence is again $k$-automatic, so we conclude that $d$ is $k$-automatic. Notice that if we replaced $k$ by its power, we still obtain the same conclusion as a sequence is $k^\ell$-automatic if and only it is $k$-automatic \cite[Thm.~6.6.4]{a_sh}.
\end{proof}

\begin{example}
  Consider the $2$-automatic fixed point $\infw{u} = \mu^\omega(a) = abbcbccabccacaab \dotsm$ of the morphism
  \begin{equation*}
    \mu\colon \begin{cases} a \mapsto ab, \\ b \mapsto bc, \\c \mapsto ca.\end{cases}
  \end{equation*}
  It is not difficult to see that the longest palindromes in $\infw{u}$ are of length $3$, so, in order to construct the transducer of the proof of \autoref{t:main}, we consider $\infw{u}$ as a fixed point of the $4$-uniform morphism 
  \begin{equation*}
    \mu^2\colon \begin{cases} a \to abbc, \\ b \to bcca, \\c \to caab.\end{cases}
  \end{equation*}
  For the alphabet $\Lambda$, we now have $\Lambda = \{A, B, C\}$ where $A=abbc$, $B=bcca$, $C=caab$. The first values of $\PPL_{\infw{u}}(n)$ starting from $n=0$ are $0,1,2,2,3,3,3,4,5$, and thus the sequence $d_{\infw{u}}$ starts with $\TT{++0+00++}$. Hence the first transition of the transducer is
  \begin{equation*}
    S \xrightarrow{A | \TT{++0+}} (A, \TT{++0}).
  \end{equation*}
  The next transition should describe the first differences in $B$ which follows an occurrence of $A$ with type $\TT{++0}$. It can be checked that it is
  \begin{equation*}
    (A,\TT{++0}) \xrightarrow{B | \TT{00++}} (B, \TT{00+}).
  \end{equation*}
  Continuing to consider blocks and their types in their order of appearance in $\infw{u}$, we can analogously find that every symbol of $\Lambda$ can have four types $\TT{++0}$, $\TT{00+}$, $\TT{0+0}$, $\TT{-+0}$. Thus the transducer has $13$ states. The possible transitions from $A$ are the following:
  \begin{align*}
    (A,\TT{00+}) &\xrightarrow{A | \TT{++0+}} (A,\TT{++0}), & (A,\TT{0+0}) &\xrightarrow{B | \TT{00++}} (B,\TT{00+}), \\
    (A,\TT{00+}) &\xrightarrow{B | \TT{-+0+}} (B,\TT{-+0}), & (A,\TT{++0}) &\xrightarrow{B | \TT{00++}} (B,\TT{00+}), \\
    (A,\TT{00+}) &\xrightarrow{C | \TT{0+0+}} (C,\TT{0+0}), & (A,\TT{-+0}) &\xrightarrow{B | \TT{00++}} (B,\TT{00+}).
  \end{align*}
  In particular, a block $A$ of any type except for $\TT{00+}$ can be followed only by the block $B$ of type $\TT{00+}$.
  
  The remaining transitions are obtained by changing the letters in the above transitions according to the cycle $A \to B \to C \to A$ since the initial morphism $\mu$ is symmetric with respect to this cycle. For example, from the transition
  \begin{equation*}
    (A,\TT{00+}) \xrightarrow{A | \TT{++0+}} (A,\TT{++0})
  \end{equation*}
  we obtain in this fashion the transition
  \begin{equation*}
    (B,\TT{00+}) \xrightarrow{B | \TT{++0+}} (B,\TT{++0}).
  \end{equation*}
  This gives a total of $19$ transitions. To be completely rigorous, we should prove that no additional states and transitions exist. Suppose the opposite and consider the first transition which is not as above. Let it be a transition from $(A, \text{\tt0+0})$ to $(A, t)$ for some $t$ (any other combination can be considered analogously). The first time this transition is taken must be preceded by a transition from the list, that is, by the transition $(B, \text{\tt00+}) \xrightarrow{} (A, \text{\tt0+0})$. Hence $BAA$ should be a factor of $\mathbf{u}$, but it is easy to check that this is not the case. Similarly if there is a transition $(A, \text{\tt0+0}) \xrightarrow{} (C, t)$, then we find that $\mathbf{u}$ should contain the forbidden factor $BAC$, and so on.

%
  It can be shown that the output of the transducer equals the infinite word $\psi_{\infw{u}}(\varphi^\omega_{\infw{u}}(s))$, where
  \begin{equation*}
    \varphi_{u}\colon \begin{cases} s \mapsto su, \\u \mapsto eu, \\ e \mapsto du, \\ d \mapsto hu, \\ h \mapsto eu \end{cases}
  \end{equation*}
  and
  \begin{equation*}
    \psi_{u}\colon
    \begin{cases}
      s,e \mapsto \TT{++0+}, \\ d \mapsto \TT{0+0+}, \\
      u   \mapsto \TT{00++}, \\ h \mapsto \TT{-+0+}.
       \end{cases} 
  \end{equation*}
  Here the symbols $s$, $d$, $e$, $u$, $h$ mean respectively the starting block $s$ of $\infw{u}$, the situation when the next block of $\infw{u}$ is down ($d$), equal ($e$) or up ($u$) to the previous block according to the cyclic order $A<B<C<A$. And $h$ (for ``high'') stands for the situation when the block is exactly the third in an ascending sequence of blocks.
  \end{example}

  In this example, we managed to construct the morphisms for $d_{\infw{u}}$ because we understand the underlying structure. Unfortunately, Cobham's theorem used in the proof of \autoref{t:main} only gives a hyperexponential bound on the number of the states of an automaton generating $d_{\infw{u}}$. Hence the theorem itself does not give a practical way to find $d_{\infw{u}}$ and the associated morphisms. In what follows, we consider two well-known examples and find their prefix palindromic length ``by hand''.

\section{Classic examples}

The examples considered in this section, namely, the paperfolding word and the Rudin-Shapiro word, are closely related to each other, and are known to contain a finite number of distinct palindromes \cite{allouche97}.

\subsection{Paperfolding word}
Recall that the paperfolding word $\infw{u}_{pf}$ is the $2$-automatic word 
\begin{equation*}
  \infw{u}_{pf} = \psi (\varphi_{pf}^{\omega}(a)) = 0010011000110110\dotsm,
\end{equation*}
where 
\begin{equation*}
  \varphi_{pf}\colon \begin{cases} a \mapsto ab, \\ b \mapsto cb, \\c \mapsto ad, \\ d \mapsto cd, \end{cases}
\end{equation*}
and the coding $\psi$ is defined as $\psi(a)=\psi(b)=0$, $\psi(c)=\psi(d)=1$. 

The longest palindromes in the paperfolding word are of length $13$, so \autoref{t:main} can be applied to it: its first difference sequence $d_{pf}$ is $2$-automatic. The blocks considered in the proof of \autoref{t:main} could be of length $16$, since it is the smallest integer power of $2$ which exceeds  the length of the longest palindrome. However, to simplify the transcducer, it is more convenient to consider  blocks of length $64$. 
\begin{theorem}\label{t:pf}
  The sequence $d_{pf}$ over the alphabet $\{\TT{-}, \TT{0}, \TT{+}\}$ is equal to $d_{pf}=\gamma_{pf}(\mu^{\omega}_{pf}(a_0))$, where 
 \begin{equation*}
   \mu_{pf}\colon
   \begin{cases}
    a_0 \mapsto a_0 b_a, \\
    a_b \mapsto a_b b_a, \quad b_a \mapsto c_b b_c, \quad c_b \mapsto a_b d_a, \quad d_a \mapsto c_b d_c,\\  
    a_d \mapsto a_d b_a, \quad b_c \mapsto c_d b_c, \quad c_d \mapsto a_d d_a, \quad d_c \mapsto c_d d_c
   \end{cases}
 \end{equation*}
 and
 \begin{equation*}
   \gamma_{pf}\colon
   \begin{cases}
     a_0 \mapsto& \hspace{-0.7em} \TT{\footnotesize{+0+0-+0+000-++0-+-}}P\TT{\footnotesize{00+00+0-+000+000+00000-0++0-+0-}}, \\
     a_b \mapsto& \hspace{-0.7em} \TT{\footnotesize{0++-0+00+00-0+0000}}P\TT{\footnotesize{00+00+0-+000+000+00000-0++0-+0-}},\\  
     a_d \mapsto& \hspace{-0.7em} \TT{\footnotesize{0+00000+00000+000-}}P\TT{\footnotesize{00+00+0-+000+000+00000-0++0-+0-}},\\
     b_a \mapsto& \hspace{-0.7em} \TT{\footnotesize{0++-0+00+00-0+0000}}P\TT{\footnotesize{+-+0-0+000+000+0+-+0-000+000+0-}},\\
     b_c \mapsto& \hspace{-0.7em} \TT{\footnotesize{+00+-00+0000+0000-}}P\TT{\footnotesize{+-+0-0+000+000+0+-+0-000+000+0-}},\\
     c_b \mapsto& \hspace{-0.7em} \TT{\footnotesize{0++-0+00+00-0+0000}}P\TT{\footnotesize{00+00+0-+000+000+00000-0++0-+00}},\\   
     c_d \mapsto& \hspace{-0.7em} \TT{\footnotesize{0+00000+00000+000-}}P\TT{\footnotesize{00+00+0-+000+000+00000-0++0-+00}},\\
     d_a \mapsto& \hspace{-0.7em} \TT{\footnotesize{0++-0+00+00-0+0000}}P\TT{\footnotesize{+-+0-0+000+000+0+-+0-000+000+00}},\\     
     d_c \mapsto& \hspace{-0.7em} \TT{\footnotesize{+00+-00+0000+0000-}}P\TT{\footnotesize{+-+0-0+000+000+0+-+0-000+000+00}}
    \end{cases} 
 \end{equation*}
 with $P = \TT{0+00+00-0++-0+0}$.
\end{theorem}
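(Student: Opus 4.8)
The plan is to exploit the self-similar block structure of $\infw{u}_{pf}$ at resolution $2^6 = 64$, which safely exceeds the maximal palindrome length $p = 13$ \cite{allouche97}. Writing $w = \varphi_{pf}^{\omega}(a)$ for the fixed point over $\{a,b,c,d\}$, the word $\infw{u}_{pf} = \psi(w)$ decomposes into length-$64$ blocks, the $i$-th of which is $\psi(\varphi_{pf}^{6}(w[i]))$ and hence is determined by the single letter $w[i]$. I would label each block not only by $w[i]$ but also by the preceding letter $w[i-1]$ (with a special symbol for $i=0$), obtaining a sequence $W$ over the nine letters $\{a_0, a_b, a_d, b_a, b_c, c_b, c_d, d_a, d_c\}$, where the subscript records the predecessor. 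The argument then splits into two essentially independent parts: a purely combinatorial identification $W = \mu_{pf}^{\omega}(a_0)$, and the analytic heart, namely that the $64$ entries of $d_{pf}$ inside block $i$ depend only on the labelled letter $W[i]$ and are given by $\gamma_{pf}(W[i])$.

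For the combinatorial part, I would observe that $\mu_{pf}$ projects onto $\varphi_{pf}$ under the map forgetting subscripts, so it suffices to check that the subscripts transform correctly. This is immediate from two elementary facts about $\varphi_{pf}$: the image $\varphi_{pf}(x)$ ends in $b$ when $x \in \{a,b\}$ and in $d$ when $x \in \{c,d\}$, and it begins with $a$ when $x \in \{a,c\}$ and with $c$ when $x \in \{b,d\}$. Since $w[2i-1]$ is the last letter of $\varphi_{pf}(w[i-1])$ and $w[2i]$ is the first letter of $\varphi_{pf}(w[i])$, the predecessors of the two children of block $i$ are determined by $W[i]$ exactly as prescribed by $\mu_{pf}$; as both $W$ and $\mu_{pf}^{\omega}(a_0)$ start with $a_0$ and $\mu_{pf}$ is $2$-uniform, they coincide. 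In passing, this also shows that the nine labels are exactly those that occur, the eight non-initial ones being in bijection with the eight length-$2$ factors of $w$.

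For the analytic part, the key tool is \autoref{p:ddd} applied with $m = 64(i-1)$, the start of the previous block: the values $d_{pf}(n)$ for $n$ ranging over block $i$ are determined by the type $d_{pf}[64(i-1)..64(i-1)+12]$ of block $i-1$ together with the content $\infw{u}_{pf}[64(i-1)..64i+63]$, that is, the contents of blocks $i-1$ and $i$, which are fixed by $w[i-1]$ and $w[i]$. I would prove by induction on $i$ the statement that the type of block $i$ equals the first $13$ symbols of $\gamma_{pf}(W[i])$. The base case $i=0$ is a direct computation of $\PPL_{\infw{u}_{pf}}$ on a short prefix. For the inductive step, each length-$2$ factor $W[i-1]W[i]$ of $\mu_{pf}^{\omega}(a_0)$ supplies all the needed data: the inductive hypothesis gives the type of block $i-1$, and the two letters give the contents. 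Running the recurrence \eqref{e:p12} of \autoref{p:ddd} over this $128$-symbol window produces the whole difference block $i$; verifying that it equals $\gamma_{pf}(W[i])$ both closes the induction (its first $13$ symbols are the new type) and proves the theorem for block $i$. Because the computation is driven only by the factor $W[i-1]W[i]$, and there are finitely many such factors, this is a finite check.

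The main obstacle is precisely this finite verification, and it is where the choice of block length $64$ (rather than the minimal $16$) pays off. The delicate point is that the type of a block must be recoverable from its label, i.e.\ from the single preceding letter, whereas a priori \autoref{p:dddd} only guarantees dependence on the full history through the chain of types. What makes the induction go through is the decoupling, special to the paperfolding word at resolution $64$, that all occurrences of a given factor $W[i-1]W[i]$ yield the same difference block regardless of earlier context; this is exactly what the finite case analysis confirms, and it would typically fail for shorter blocks. Carrying out the analysis requires the explicit finite list of palindromic factors of $\infw{u}_{pf}$ of length at most $13$ in order to run the recurrence, together with a careful enumeration of the length-$2$ factors of $\mu_{pf}^{\omega}(a_0)$; these steps are routine but tedious, and are naturally delegated to a computer.
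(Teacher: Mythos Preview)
Your proposal is correct and follows essentially the same strategy as the paper: decompose $\infw{u}_{pf}$ into length-$64$ blocks, enrich each block letter with its predecessor to obtain $W=\mu_{pf}^{\omega}(a_0)$, and reduce the claim to a finite verification (over the length-$2$ factors) that the difference sequence on each block matches the corresponding $\gamma_{pf}$-image. The one refinement the paper makes explicit, which you leave buried inside the finite check, is \emph{why} the decoupling succeeds: the common word $P$ of length $15>13$ sits at position $18$ of every $\gamma_{pf}$-image, so by \autoref{p:ddd} the type of each block is already fixed by the content of the preceding block alone, independently of that block's own type.
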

\begin{proof}
  Let $\infw{v}$ be the fixed point $\varphi_{pf}^\omega(a)$ of $\varphi_{pf}$ and $\infw{w}$ be the fixed point $\mu_{pf}^\omega$ of $\mu_{pf}$ starting with $a_0$. The word $\infw{v}$ is obtained from $\infw{w}$ by the identification $a_0, a_b, a_d \mapsto a$, $b_a, b_c \mapsto b$, $c_b, c_d \mapsto c$, $d_a, d_c \mapsto d$. The subscript of a letter occurring in $\infw{w}$ indicates that the letter (after identification) in $\infw{v}$ is preceded by the letter indicated by the subscript, that is, $a_b$ is corresponds to $a$ preceded by $b$ in $\infw{v}$ etc. The letter $a_0$ simply corresponds to the first occurrence of $a$ in $\infw{v}$.

  We know by \autoref{t:main} that a transducer $T$ mapping $\infw{u}_{pf}$ to $d_{pf}$ exists. Here we set the parameter $k$ of the proof of \autoref{t:main} to equal $2^6$. This means that $T$ outputs blocks of length $64$. Write $\infw{u}_{pf} = U[0] U[1] \dotsm$ as a concatenation of $\Lambda$-blocks $U[i]$. For the claim, it suffices to prove that the output of $T$ on $U[0] U[1] \dotsm U[n]$ equals $\gamma_{pf}(\infw{w}[0..n])$ for all $n$.
  
  The factors of $\infw{v}$ of length $2$ appear in its prefix of length $13$. This means that the prefix of $\infw{u}_{pf}$ of length $13 \times 2^6$, which is a concatenation of $\Lambda$-blocks, contains all possible adjacent $\Lambda$-blocks at least once. We can directly check that $\gamma_{pf}(\infw{w}[0..12])$ coincides with the prefix of $d_{pf}$ of length $13 \times 2^6$ meaning that $\gamma_{pf}(\infw{w}[0..12])$ equals the output of $T$ on $U[0] \dotsm U[12]$. Let us now make the following observation. The prefix of length $18$ of each $\gamma_{pf}$-image is followed by the word $P = \TT{0+00+00-0++-0+0}$ of length $15$. Since the longest palindrome in $\infw{u}_{pf}$ has length $13$, \autoref{p:ddd} implies that for $n = 1, \ldots, 12$, the type of $U[n]$ depends on $P$ and $U[n - 1]$, not on the type of $U[n - 1]$. Since $P$ occurs in the same position in every $\gamma_{pf}$-image, we see that the type of $U[n]$ depends only on $U[n - 1]$.

  Let $k \geq 12$ be such that the type of $U[n]$ depends only on $U[n - 1]$ and that the output of $T$ on $U[0] \dotsm U[n]$ matches $\gamma_{pf}(\infw{w}[0..n])$ for all $n = 1, \ldots, k$. By \autoref{p:ddd}, the type of $U[n+1]$ is determined by $U[n]$ and its type. Since $T$ outputs $\gamma_{pf}(\infw{w}[n])$ when reading $U[n]$ and $\gamma_{pf}(\infw{w}[n])$ contains $P$ at position $18$ independently of the letter $\infw{w}[n]$, it follows from \autoref{p:ddd} that the type of $U[n+1]$ depends only on $U[n]$. Since $k \geq 12$ and all factors of $\infw{v}$ of length $2$ appear in its prefix of length $13$, there exists $t \leq 11$ such that $U[t] = U[n]$ and $U[t+1] = U[n+1]$. The output of $T$ on the transition $U[n] \xrightarrow{} U[n+1]$ must match that of $U[t] \xrightarrow{} U[t+1]$ because $T$ is deterministic and the type of the $\Lambda$-block is irrelevant in both cases. Therefore $T$ outputs $\gamma_{df}(\infw{w}[t+1])$ when reading $U[n+1]$. It now suffices to show that $\infw{w}[n+1] = \infw{w}[t+1]$ in order to conclude by induction that $d_{pf} = \gamma_{pf}(\infw{w})$.

  We have $U[i] = \psi( \varphi_{pf}^6( \infw{v}[i] ))$ for all $i$. It is straightforward to verify that $\psi$ is injective on the set of $\Lambda$-blocks and that $\varphi_{pf}^6$ is injective, so we deduce from the equalities $U[n] = U[t]$ and $U[n+1] = U[t+1]$ that $\infw{v}[n] = \infw{v}[t]$ and $\infw{v}[n+1] = \infw{v}[t+1]$. From the first paragraph of the proof, we infer that $\infw{w}[n+1] = \infw{w}[t+1]$. The claim follows.
\end{proof}

\subsection{Rudin-Shapiro word}
The Rudin-Shapiro word $\infw{u}_{rs}$ is the $2$-automatic word 
\begin{equation*}
  \infw{u}_{rs}=\psi (\varphi_{rs}^{\omega}(a)) = 00010010000111010\dotsm,
\end{equation*}
where 
\begin{equation*}
  \varphi_{rs}\colon \begin{cases} a \to ab, \\ b\to ac, \\ c \to db, \\d \to dc, \end{cases}
\end{equation*}
and the coding $\psi$ is defined by $\psi(a)=\psi(b)=0$, $\psi(c)=\psi(d)=1$. 

The longest palindromes in the Rudin-Shapiro word are of length $14$, so \autoref{t:main} can be applied to it: its first difference sequence $d_{rs}$ is $2$-automatic. The following theorem describes it.

\begin{theorem}
  The sequence $d_{rs}$ over the alphabet $\{\TT{-}, \TT{0}, \TT{+}\}$ is equal to
  \begin{equation*}
    d_{rs}=\gamma_{rs}(\mu^{\omega}_{rs}(A)),
  \end{equation*}
  where 
  \begin{equation*}
    \mu_{rs}\colon \begin{cases} A \to AB, \\ B \to CD, \\ C \to EB, \\ D \to ED, \\ E \to CB \end{cases}
  \end{equation*}
  and
  \begin{equation*}
    \gamma_{rs}\colon
    \begin{cases}
      A \mapsto& \hspace{-0.7em} \TT{\footnotesize{+00+00000-++00-++00-+0+00+00+00+-0+00-+00+-0+0+0-0+0}}P, \\
      B \mapsto& \hspace{-0.7em} \TT{\footnotesize{0+0-0++-00+0+0-++00-+0+00+00+00+0+0-0++-00+0+0-+000+}}P, \\
      C \mapsto& \hspace{-0.7em} \TT{\footnotesize{-0+00-+00+-0+0+00+00-++-+00+000+0-+000+-+0-0+0+0-0+0}}P, \\
      D \mapsto& \hspace{-0.7em} \TT{\footnotesize{-0+00-+00+-0+0+00+00-++-+00+000+0+0-0++-00+0+0-+000+}}P, \\
      E \mapsto& \hspace{-0.7em} \TT{\footnotesize{0+0-0++-00+0+0-++00-+0+00+00+00+-0+00-+00+-0+0+0-0+0}}P.
    \end{cases} 
  \end{equation*}
  with $P = \TT{0-+00+00+00+}$.
\end{theorem}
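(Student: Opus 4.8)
The plan is to proceed exactly as in the proof of \autoref{t:pf}. Put $\infw{w} = \mu_{rs}^\omega(A)$; the aim is to prove $\gamma_{rs}(\infw{w}) = d_{rs}$. By \autoref{t:main} there is a transducer $T$ taking the $\Lambda$-block decomposition of $\infw{u}_{rs}$ to $d_{rs}$, and since the longest palindrome has length $14$ I would read blocks of length $2^6 = 64$, so that $T$ emits one word of length $64$ per block. Writing $\infw{u}_{rs} = U[0]U[1]\dotsm$ as a concatenation of $\Lambda$-blocks, it then suffices to show that the output of $T$ on $U[0]\dotsm U[n]$ equals $\gamma_{rs}(\infw{w}[0..n])$ for every $n$.

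The base case is a finite computation: since $\infw{u}_{rs}$ and $d_{rs}$ are $2$-automatic, only finitely many pairs (block, type) occur and all of them, together with every admissible next block, already appear in a bounded prefix; evaluating $d_{rs}$ on the corresponding prefix of $\infw{u}_{rs}$ confirms that it agrees with $\gamma_{rs}(\infw{w})$ there and exhibits each transition at least once. The induction step rests on \autoref{p:ddd} and \autoref{p:dddd}: the type of $U[n+1]$ is determined by $U[n]$, $U[n+1]$, and the type of $U[n]$, so matching the data (block $U[n]$, its type, next block $U[n+1]$) with an earlier occurrence and invoking the determinism of $T$ forces the length-$64$ output on $U[n+1]$ to repeat; one then verifies that this output is exactly $\gamma_{rs}(\infw{w}[n+1])$, which closes the induction.

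The decisive difficulty, and the real difference from \autoref{t:pf}, is that the synchronizing word $P$ has length $12$, strictly below the maximal palindrome length $14$. In the paperfolding argument a fixed occurrence of a word of length at least the palindrome bound made the type of a block independent of the type of its predecessor, so that tracking $\Lambda$-blocks alone sufficed; that desynchronization is unavailable here, because the two symbols preceding $P$ are $\TT{+0}$ in the images of $A$, $C$, $E$ and $\TT{0+}$ in those of $B$, $D$, so the length-$14$ window demanded by \autoref{p:ddd} genuinely depends on which letter produced the previous block. The type must therefore be carried through the whole induction.

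Worse, $\mu_{rs}$ does \emph{not} project onto $\varphi_{rs}^\omega(a)$: already its third letter $C$ lies above an $a$ while a later $C$ lies above a $d$, so the five letters are not in bijection with the four $\Lambda$-blocks, and a single letter may govern blocks of different content whose length-$64$ outputs nevertheless coincide. Consequently the five letters cannot be read off as (block, type) states in any naive way; what one must instead show is that the labeling of output blocks by $A,B,C,D,E$ is compatible with the $2$-uniform self-similarity, i.e.\ that whenever one letter represents two different raw (block, type) situations, these expand to identically labeled pairs, so that $\mu_{rs}$ faithfully reproduces how the output labels subdivide. Establishing this closure, and with it the consistency of $\mu_{rs}$ and $\gamma_{rs}$, is the heart of the proof and the step I expect to be the main obstacle.
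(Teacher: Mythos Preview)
You have correctly spotted the difficulty---the suffix $P$ common to all $\gamma_{rs}$-images has length $12$, below the maximal palindrome length $14$---but you have misdiagnosed its resolution. The paper does \emph{not} carry the type through the induction. Instead it checks directly that no palindrome of length $13$ or $14$ in $\infw{u}_{rs}$ ends at a block boundary (or one position past it): there are no palindromes of length $13$ at all, and the two possible length-$12$ suffixes of the four $\Lambda$-blocks, namely $110100011101$ and $001011100010$, cannot be extended on the right by two letters of $\infw{u}_{rs}$ to a palindrome of length $14$. Hence at these positions only palindromes of length $\leq 12$ matter in \eqref{e:p12}, so the common suffix $P$ of length $12$ is enough, and the type of $U[n]$ depends only on $U[n-1]$, exactly as in the paperfolding case.

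Your second worry, that $\mu_{rs}$ does not project onto $\varphi_{rs}^\omega(a)$, is also handled differently in the paper. One first passes to a nine-letter morphism $\nu_{rs}$ on symbols $a_0,a_b,a_c,b_a,b_d,c_a,c_d,d_b,d_c$ that records the preceding letter; this morphism \emph{does} project to $\varphi_{rs}$, and the argument of \autoref{t:pf} goes through verbatim to give $d_{rs}=\delta_{rs}(\nu_{rs}^\omega(a_0))$ for an explicit coding $\delta_{rs}$. Only then does one observe that $\delta_{rs}$ identifies the nine letters in pairs ($b_a,c_d\mapsto B$; $a_b,d_c\mapsto C$; $c_a,b_d\mapsto D$; $a_c,d_b\mapsto E$; $a_0\mapsto A$), and that $\nu_{rs}$ respects this identification, yielding $\mu_{rs}$. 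So the ``closure'' you anticipate is a one-line symmetry check at the very end, not the heart of the argument.
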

\begin{proof}
  As previously for the paperfolding word, we define a new morphism $\nu_{rs}$ obtained from $\varphi_{rs}$ by adding to each letter information on the preceding one:
  \begin{equation*}
     \nu_{rs}\colon
      \begin{cases}
        \begin{tabular}{l l l l}
          $a_0 \to a_0b_a$;  &                    &                     & \\ 
          $a_b \to a_c b_a$; & $b_a \to a_b c_a$; & $c_a \to d_b  b_d$; & $d_b \to d_c c_d$;\\  
          $a_c \to a_b b_a$; & $b_d \to a_c c_a$; & $c_d \to d_c b_d$;  & $d_c \to d_b c_d$.
        \end{tabular}
      \end{cases}
  \end{equation*}
  The morphism $\varphi_{rs}$ and its fixed point $\infw{v}$ are obtained from $\nu_{rs}$ and its fixed point $\infw{w}$ by the identification $a_0, a_b, a_c \mapsto a$, $b_a, b_d \mapsto b$, $c_a, c_d \mapsto c$, $d_b, d_c \mapsto d$.

  We proceed as in the proof of \autoref{t:pf}. We set the parameter $k$ of the proof of \autoref{t:main} to equal $2^6$. Write $\infw{u}_{rs} = U[0] U[1] \dotsm$ as a concatenation of $\Lambda$-blocks $U[i]$. All factors of $\infw{v}$ of length $2$ appear in its prefix of length $14$, so all adjacent $\Lambda$-block appear in the prefix of $\infw{u}_{rs}$ of length $14 \times 2^6$. Taking the prefix of length $14 \times 2^6$ of $d_{rs}$, we observe that it coincides with the word $\delta_{rs}(\infw{w}[0..13])$ where
  \begin{equation*}
    \delta_{rs}\colon
    \begin{cases}
      a_0      \mapsto& \hspace{-0.7em} \TT{\footnotesize{+00+00000-++00-++00-+0+00+00+00+-0+00-+00+-0+0+0-0+00-+00+00+00+}}, \\
      b_a,c_d  \mapsto& \hspace{-0.7em} \TT{\footnotesize{0+0-0++-00+0+0-++00-+0+00+00+00+0+0-0++-00+0+0-+000+0-+00+00+00+}},\\
      a_b,d_c  \mapsto& \hspace{-0.7em} \TT{\footnotesize{-0+00-+00+-0+0+00+00-++-+00+000+0-+000+-+0-0+0+0-0+00-+00+00+00+}},\\
      c_a,b_d  \mapsto& \hspace{-0.7em} \TT{\footnotesize{-0+00-+00+-0+0+00+00-++-+00+000+0+0-0++-00+0+0-+000+0-+00+00+00+}},\\
      a_c,d_b  \mapsto& \hspace{-0.7em} \TT{\footnotesize{0+0-0++-00+0+0-++00-+0+00+00+00+-0+00-+00+-0+0+0-0+00-+00+00+00+}}.           
    \end{cases}
  \end{equation*}
  Each $\delta_{rs}$-image of a letter ends with the word $\TT{0-+00+00+00+}$ of length $12$. This word $P$ is shorter than the longest palindrome in $\infw{u}_{rs}$, so we cannot directly deduce that the type of the block $U[n]$ depends only on $U[n-1]$. By \autoref{p:ddd}, the number $d_{rs}( (n-1) 2^6)$ depends on the previous $14$ values of $d_{rs}$ that correspond to a palindrome ending at position $(n-1) 2^6$ of $\infw{u}_{rs}$. We claim that such a palindrome has length at most $12$. This implies that $d_{rs}( (n-1) 2^6)$ is determined by the previous $12$ values of $d_{rs}$. If such a palindrome has length greater than $12$, it must be of length $14$ as $\infw{u}_{rs}$ contains no palindromes of length $13$. Two of the $\Lambda$-blocks end with $110100011101$ and the remaining two end with $001011100010$. It is straightforward to see that neither suffix can be covered by a palindrome of length $14$ in the required way. Thus the palindrome has length at most $12$. A similar argument can be repeated for the number $d_{rs}( (n-1) 2^6 + 1)$. Since each $\delta_{rs}$-image ends with $P$ of length $12$, we deduce by \autoref{p:ddd} that the type of $U[n]$ depends only on $U[n-1]$ not on its type. We may now repeat the arguments of the proof of \autoref{t:pf} and conclude that $d_{rs} = \delta_{rs}(\infw{w})$ (indeed $\varphi_{rs}$ is injective and $\psi$ is injective on the set of $\Lambda$-blocks).

  To prove the theorem, it remains to notice the symmetry in $\delta_{rs}$ and identify $b_a, c_d $ as $B$, $a_b,d_c$ as $C$, $c_a,b_d$ as $D$, $a_c,d_b$ as $E$. After renaming $a_0$ as $A$, we see that $\nu_{rs}^\omega(a_0)$ equals $\mu_{rs}^\omega(A)$ after this identification. Thus $\delta_{rs}(\nu_{rs}^\omega(a_0)) = \gamma_{rs}(\mu_{rs}^\omega(A)$ and the claim follows.
\end{proof}

\section{Computational results and conjectures}
This section contains results of computational experiments which thus do not give any theorems but only conjectures. For a fast computation of the prefix palindromic length, we used an implementation \cite{rosetta} of the Eertree data structure \cite{shur1}; see also \cite{shur3} for related algorithms.

\subsection{Period-doubling word}\label{s:pd}
\autoref{t:main} and the result for the Thue-Morse word allow to conjecture that the $\PPL$-difference sequence $d_{\infw{u}}$ of a $k$-automatic word is always $k$-automatic. The following example, however, suggests that this is not the case.

The period-doubling word $\infw{u}_{pd}$ is the $2$-automatic word 
\begin{equation*}
  \infw{u}_{pd}=\varphi_{pd}^{\omega}(a)=	abaaabababaaabaa\dotsm,
\end{equation*}
where 
\begin{equation*}
  \varphi_{pd}\colon \begin{cases} a \to ab, \\ b\to aa. \end{cases}
\end{equation*}
Clearly, it contains infinitely many palindromes, including all prefixes of length $2^n-1$. 
Thus \autoref{t:main} does not apply to it.

One way to show that a sequence is not automatic is to prove that it has superlinear factor complexity function \cite[Thm.~10.3.1]{a_sh}. We computed the factor complexity of $d_{pd}$ for lengths up to $1000$ and found that, disregarding some initial values, the values closely follow a line with slope $200$. To find evidence for the nonautomaticity of $d_{pd}$, we thus need to propose another experiment.

This approach does not work here because 

In our computational experiment, we estimate the cardinality of the $2$-kernel of the PPL-difference sequence $d_{pd}$ of $\infw{u}_{pd}$. If $d_{pd}$ is $2$-automatic, its $2$-kernel must be finite. We estimate the number of its elements as follows. 

Let $m \geq 1$. Consider a sequence $(d_{pd}[2^e n + b])_n$ from the $2$-kernel of $d_{pd}$ and compute its prefix $d_{e,b}$ such that $2^e n + b \leq 4^m$. Only finitely many different words $d_{e,b}$ are nonempty: in particular, all such words of length at least $2$ correspond to $e<2m$, so there are finitely many parameters to consider. Then we exclude from the set of words $d_{e,b}$ those which are proper prefixes of another word of this set. Let $k_m$ be the number of nonempty words $d_{e,b}$ that remain. Then, clearly, the $2$-kernel of $d_{pd}$ contains at least $k_m$ elements. 

The following table collects the values of $k_m$ for $m = 1, \ldots, 11$.

{\small
\begin{center}
\begin{tabular}{ |c||c|c|c|c|c|c|c|c|c|c|c| } 
  \hline
 $4^m$ & $4$ & $4^2$&$4^3$&$4^4$&$4^5$&$4^6$&$4^7$& $4^8$&$4^9$&$4^{10}$&$4^{11}=4194304$\\ 
 \hline
 $k_m$& 2& 9& 22& 66& 145& 297& 584& 1046& 1816& 3047& 5051\\
 \hline
 $k_{m}/k_{m-1}$&&4.5&2.444&3.0&2.197&2.048&1.966&1.791&1.736&1.678&1.658\\
  \hline
\end{tabular}
\end{center}
}

Our data thus indicates that the $2$-kernel of $\infw{u}_{pd}$ contains at least $5051$ distinct sequences. Moreover, a four times longer prefix gives at least $1.65$ times larger $2$-kernel, and the ratio decreases too slowly to conjecture that it would tend to $1$. This makes an impressive contrast with all the previous examples where the size of the kernel rapidly stabilizes. Based on this, we formulate the following conjecture.

\begin{conjecture}
  The sequence $d_{pd}$ of the period-doubling word $\infw{u}_{pd}$ is {\rm not} $2$-automatic, and so the prefix palindromic length $\PPL_{pd}(n)$ of $\infw{u}_{pd}$ is not $2$-regular.
\end{conjecture}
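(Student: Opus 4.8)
The plan is to prove that the $2$-kernel of $d_{pd}$ is infinite; by the kernel criterion cited above (an infinite word is $k$-automatic if and only if its $k$-kernel is finite), this shows $d_{pd}$ is not $2$-automatic, and then the lemma equating $2$-regularity of $\PPL_{pd}$ with $2$-automaticity of $d_{pd}$ yields the second assertion. The factor-complexity route is not available: the computations reported above indicate that the complexity of $d_{pd}$ is linear, so \cite[Thm.~10.3.1]{a_sh} cannot be invoked. Thus the kernel is the natural invariant to attack, and the table of values $k_m$ is precisely a lower bound on its cardinality, so the task is to make that lower bound provably unbounded.

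First I would exploit the exceptional palindromic structure of $\infw{u}_{pd}$: every prefix of length $2^m-1$ is a palindrome, so $\PPL_{pd}(2^m-1)=1$ for all $m\ge 1$, and since consecutive values of $\PPL_{pd}$ differ by at most $1$ (\autoref{l:difference_bounds}), the graph of $\PPL_{pd}$ decomposes into ``hills'' over the intervals $[2^m-1,\,2^{m+1}-1]$, each rising from and returning to the value $1$. The goal of the analysis is an exact self-similar recurrence expressing the hill at scale $m+1$ in terms of the hill at scale $m$ together with a correction governed by the $2$-adic valuation of $n+1$ that defines $\infw{u}_{pd}$. Such a recurrence would play the role that the morphism for $d_{\infw{t}}$ played for the Thue-Morse word, but now with palindromes of unbounded length entering the minimum in \eqref{e:p12}.

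With a recurrence in hand, I would produce an explicit infinite family of pairwise distinct kernel elements. Concretely, I would choose offsets $b_e$ anchored at the palindromic positions (for instance $b_e=2^e-1$) and compare the subsequences $\bigl(d_{pd}[2^e n+b_e]\bigr)_n$ across infinitely many $e$, exhibiting for each pair an index at which they disagree, the disagreements being forced by the correction term, i.e.\ by positions whose $2$-adic valuation is controlled. Showing that infinitely many of these subsequences are genuinely different---rather than merely shifted or recoded copies of finitely many patterns---is what upgrades the numerical lower bounds $k_m$ into a proof that the kernel is infinite.

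The hard part will be the first analytic step. Because $\infw{u}_{pd}$ contains palindromes of every length $2^m-1$, the minimum in \eqref{e:p12} ranges over a look-back window that grows without bound, so the finite-state transducer of \autoref{t:main} has no analogue here and there is no a priori finite control on the optimal factorization. Establishing a lower bound on $\PPL_{pd}$ sharp enough to pin down the hill shapes is exactly the kind of estimate the introduction flags as ``astonishingly difficult,'' and everything downstream depends on it. A plausible intermediate target, weaker than a full formula, is to prove merely that the shapes of the hills cannot all be generated by a single finite automaton---for example by locating, inside each hill, a feature (a peak position, or the length of a plateau) whose dependence on $m$ is provably not $2$-recognizable; this weaker statement would already contradict $2$-automaticity while sidestepping an exact closed form.
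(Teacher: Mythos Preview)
The statement you are addressing is a \emph{conjecture} in the paper, not a theorem: the paper offers no proof, only computational evidence (the growing lower bounds $k_m$ on the size of the $2$-kernel). So there is no ``paper's own proof'' to compare against.

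Your proposal is not a proof either; it is a research plan. You correctly identify the only viable route---show the $2$-kernel is infinite, since the factor-complexity obstruction appears unavailable---and you correctly note that the unbounded palindromes at positions $2^m-1$ are what break the finite-window argument of \autoref{t:main}. But every substantive step is left as a wish: you ``would'' derive a self-similar recurrence for $\PPL_{pd}$ on the intervals $[2^m-1,2^{m+1}-1]$, you ``would'' then extract infinitely many distinct kernel subsequences, and you explicitly flag the first of these as the hard part. Neither the recurrence nor the separating indices are produced, and you yourself concede that obtaining the needed lower bounds on $\PPL_{pd}$ is exactly the kind of estimate the literature cannot currently supply. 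That is the gap: nothing has been proved, and the plan does not go beyond what the paper's discussion already suggests (the table of $k_m$ \emph{is} the kernel lower bound you describe). Until the recurrence---or at least the non-$2$-recognizable feature you mention at the end---is actually established, the conjecture remains open.
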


Another way to show that a sequence is not automatic would be to prove that it has superlinear factor complexity function \cite[Thm.~10.3.1]{a_sh}. However, it does not look to be the case: we computed the factor complexity of $d_{pd}$ for lengths up to $500$ and found that, except for initial values, the values closely follow a straight line with slope about $200$. So, we could suggest that even if the PPL-difference function of the period-doubling word is not $2$-automatic, it can still be morphic.

\subsection{Fibonacci word}
The Fibonacci word $\infw{u}_{f}=abaababaabaab\dotsm$ is the fixed point $\varphi_{f}^\omega(a)$ of the morphism
\begin{equation*}
  \varphi_{f}\colon \begin{cases} a \to ab, \\ b \to a. \end{cases}
\end{equation*}
The Fibonacci word is a classic example of an infinite word; it is not $k$-automatic for any $k$ but is \emph{Fibonacci-automatic} in the sense which we explain below.

As usual, we define the Fibonacci numbers by the recurrence relation $F_0 = 0$, $F_1 = 1$, and $F_n = F_{n-1} + F_{n-2}$ for $n \geq 2$. Every positive integer $n$ can be uniquely expressed as $n = \sum_{0 \leq i \leq r} a_{i} F_{i+2}$ with $a_i \in \{0, 1\}$, $a_r =1$, and $a_i a_{i+1} = 0$ for $0 \leq i < r$. In this case, we call the word $a_r a_{r-1} \dotsm a_0$ the \emph{Fibonacci representation} of $n$ and use the notation $(n)_F = a_r a_{r-1} \dotsm a_0$. For example, we have $(3)_F = 100$ and $(12)_F = 10101$. We also fix $(0)_F=0$. 

As is well-known, $\infw{u}_f[n]=b$ if and only if $(n)_F$ ends with $1$; in the opposite case, we have $\infw{u}_f[n]=a$. Thus every symbol of the Fibonacci word can be computed from the Fibonacci representation of its index by a simple automaton. This means that the Fibonacci word is \emph{Fibonacci-automatic}. In general, an infinite word $\infw{x}$ is Fibonacci-automatic if there exists a deterministic finite automaton $A$ such that every symbol $\infw{x}[n]$ is the output of $A$ with input $(n)_F$. 
Many functions of the Fibonacci word are known to be Fibonacci-automatic or Fibonacci-regular; for the definition and discussions of Fibonacci-regular sequences, see  \cite{mss-1,dmss}.

Analogously to a $k$-kernel for $k$-automatic sequences, we define the \emph{Fibonacci-kernel} of a sequence $\infw{w}$ as follows. For every finite word $s \in \{0,1\}^*$, define $(i_s)$ as the increasing sequence of all numbers $n$ such that $(n)_F$ ends with the suffix $s$. For example, $(i_\varepsilon)=0,1,2,\ldots$, $(i_0)=0,2,3,5,7,\ldots$, $(i_1)=1,4,6,9,12\ldots$, and $(i_{11})$ is empty since the Fibonacci representation cannot contain two consecutive $1$'s.

Now we define a sequence $\infw{w}(s)$ as the subsequence of $\infw{w}$ with indices from $(i_s)$, namely, $\infw{w}(s)=\infw{w}[i_s[0]]\infw{w}[i_s[1]]\infw{w}[i_s[2]]\dotsm$. At last we define the \emph{Fibonacci-kernel} of $\infw{w}$ as the set of nonempty sequences $\infw{w}(s)$ for all $s \in \{0,1\}^*$.

For example, the Fibonacci-kernel of the Fibonacci word $\infw{u}_f$ consists of three elements: the Fibonacci word $\infw{u}_f=\infw{u}_f(\varepsilon)$ itself and the sequences $aa\dotsm a \dotsm = \infw{u}_f(0)$ and $bb \dotsm b \dotsm = \infw{u}_f(1)$. Indeed, we have $\infw{u}_f(p0)=aa\dotsm  a \dotsm = \infw{u}_f(0)$ and $\infw{u}_f(p1)=bb\dotsm b \dotsm = \infw{u}_f(1)$ for every finite word $p$ (or the sequences $\infw{u}_f(p0)$ and $\infw{u}_f(p1)$ are empty).


Notice that the Fibonacci-kernel of an infinite word always contains the empty sequence because $11$ does not occur in Fibonacci representations. We largely ignore this fact.

Analogously to the proof for $k$-automatic words, it can be shown that a sequence is Fibonacci-automatic if and only if its Fibonacci-kernel is finite. 


The existing family of decidability results on Fibonacci-automatic words \cite{mss-1,dmss} is mostly analogous to the $k$-automatic case. It would be interesting to find an example of a reasonable function of the Fibonacci word which takes a finite number of values and is not Fibonacci-automatic. It seems that the PPL-difference sequence $d_f$ of the Fibonacci word is a good candidate for that.

Similar to \autoref{s:pd}, we consider words determined by the (nonempty) sequences of the Fibonacci-kernel of $d_{f}$ and the prefix of $d_{f}$ of length $|\varphi_{f}^{3m}(a)|$ for $m = 1, 2, \ldots$. Let again $k_m$ be the number of the corresponding nonempty words that are not prefixes of each other. Our computations give the following values for $k_m$ for $m = 1, \ldots, 8$.

\begin{center}
\begin{tabular}{ |c||c|c|c|c|c|c|c|c| } 
 \hline
 $|\varphi_{f}^{3m}(a)|$ & $5$ & $21$   & $89$   & $377$  & $1597$  & $6765$ & $28657$ & $121393$ \\
 \hline
 $k_m$                   & $3$ & $11$   & $31$   & $88$   & $207$  & $504$  & $1139$  & $2377$ \\
 \hline
 $k_{m}/k_{m-1}$         &     & $3.67$ & $2.82$ & $2.85$ & $2.35$ & $2.43$ & $2.26$  & $2.09$ \\
  \hline
\end{tabular}
\end{center}

While this evidence is not as strong as in the case of the period-doubling word, we conclude that the Fibonacci-kernel of $d_{f}$ has at least $2377$ elements and the kernel does not seem to stabilize. We make the following conjecture.

\begin{conjecture}
  The sequence $d_{f}$ of the Fibonacci word $\infw{u}_{f}$ is \emph{not} Fibonacci-automatic, and so the prefix palindromic length $\PPL_{f}(n)$ of $\infw{u}_{f}$ is not Fibonacci-regular.
\end{conjecture}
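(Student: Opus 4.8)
The plan is to prove the stronger statement that the Fibonacci-kernel of $d_f$ is infinite. By the Fibonacci analogue of the kernel criterion recalled above, this is equivalent to $d_f$ not being Fibonacci-automatic; the non-regularity of $\PPL_f$ then follows by the Fibonacci analogue of the lemma relating $\PPL_{\infw{u}}$ to $d_{\infw{u}}$, using that a bounded Fibonacci-regular sequence is Fibonacci-automatic \cite{mss-1,dmss}. Concretely, I would aim to exhibit an explicit infinite family of pairwise distinct kernel subsequences $d_f(s)$, indexed by words $s \in \{0,1\}^*$, thereby upgrading the heuristic count $k_m$ of the previous table into a provably divergent lower bound on the size of the kernel.

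Before any such family can be controlled, one needs a usable description of $d_f$ itself, and here the difficulty is structural: because $\infw{u}_f$ contains palindromes of unbounded length, the window argument underlying \autoref{p:ddd} collapses — the last palindrome of an optimal decomposition of a prefix may start arbitrarily far back — so there is no finite transducer of the kind built in the proof of \autoref{t:main}. I would instead try to extract a recursive description of $\PPL_f$ from the self-similarity of $\infw{u}_f$ under $\varphi_f$ together with the palindromic combinatorics of Sturmian prefixes used by the first author to establish unboundedness of $\PPL_f$ in \cite{frid}. The goal of this step is a recurrence expressing the relevant values of $\PPL_f$ at positions near the Fibonacci numbers $F_m$ in terms of smaller-scale values, indexed by the Zeckendorf (Fibonacci) representation of the position.

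With such a recurrence in hand, the second step is a separation argument. For a suitable family of suffixes $s$ (for example $s = (10)^j$, or the Fibonacci representations of an increasing sequence of distinguished positions), the subsequence $d_f(s)$ samples $d_f$ along a sparse, structured progression in the Fibonacci numeration, and I would try to show that the profile of $d_f(s)$ at scale $F_m$ records the Zeckendorf depth of the sampled positions, which grows with the length of $s$. The separation itself could be phrased as an unbounded invariant — say, the first index at which a partial sum of $d_f(s)$ forces a given minimal value of $\PPL_f$ — that is preserved under kernel refinement and takes infinitely many values, so that no finite set of kernel sequences can account for all the $d_f(s)$.

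The main obstacle is precisely the absence of a closed form for $\PPL_f$: everything hinges on converting the qualitative unboundedness result of \cite{frid} into a quantitative, position-indexed recurrence fine enough to read off distinct kernel rows. Proving the negative — that the resulting structure is genuinely not finite-state — is the delicate point, since a merely recursive (morphic) description of $d_f$ would be entirely compatible with Fibonacci-automaticity; one must isolate an invariant that \emph{provably} diverges, not one that is only observed to grow in the table of $k_m$. If this quantitative control proves intractable, a fallback is to target the weaker conclusion that $\PPL_f$ is not Fibonacci-regular directly, by bounding from below the rank of the $\Z$-module generated by any finite set of kernel rows and showing it exceeds every fixed bound; but this too needs the same recurrence as its engine.
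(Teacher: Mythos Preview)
The statement you are attempting to prove is a \emph{conjecture}: the paper offers no proof, only the computational evidence in the table of values $k_m$ preceding it. There is therefore nothing in the paper to compare your proposal against, and the authors explicitly leave the claim open.

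Your submission is not a proof but a research programme. You correctly identify the central obstruction --- that \autoref{p:ddd} and the transducer method of \autoref{t:main} fail because $\infw{u}_f$ has palindromes of unbounded length --- and you outline a plausible strategy (extract a Zeckendorf-indexed recurrence for $\PPL_f$ from the self-similarity of $\infw{u}_f$, then separate kernel rows via an unbounded invariant). But none of the steps is carried out: no recurrence is stated, no family of suffixes $s$ is shown to give pairwise distinct $d_f(s)$, and you yourself flag that ``everything hinges on'' a quantitative control that is not supplied. As written, the proposal contains no verifiable mathematical content beyond what the paper already discusses as motivation for the conjecture. If you wish to make progress, the first concrete deliverable would be an explicit recurrence or inequality for $\PPL_f$ at positions with a prescribed Zeckendorf suffix; without that, the separation argument in your second step has nothing to act on.
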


We remark that, as in the case of the period-doubling word, the sequence $d_{f}$ seems to have linear factor complexity based on the first $1000$ values. All Fibonacci-automatic words have linear factor complexity \cite[Thm.~3.4]{parry}.

All Fibonacci-automatic words have linear factor complexity \cite[Thm.~3.4]{parry}, so, as for the period-doubling word, a superlinear factor complexity function would give another evidence supporting the conjecture. However, the sequence $d_{f}$ seems to have linear factor complexity based on the first $500$ values and can thus still be morphic even if not Fibonacci-automatic.

\section{Conclusion}
In this paper, we have proven a general theorem on the prefix palindromic length of automatic words containing finitely many distinct palindromes and considered in detail two particular cases when this theorem is applicable. These results were somehow predictable since they state that a reasonable function of a $k$-automatic word is $k$-regular. What is more surprising is the computational evidence that in some other situations this is not the case: it seems that there exist simple $k$-automatic words, such as the period-doubling word, such that their prefix palindromic length is not $k$-regular. If proven, this result would enrich the whole theory of $k$-regularity.


\begin{thebibliography}{10}
\bibitem{allouche97}
J.-P.~Allouche.
\newblock Schr\"odinger operators with Rudin-Shapiro potentials are not palindromic.
\newblock \emph{J. Math. Phys.} 38 (1997), 1843--1848. DOI:\href{https://doi.org/10.1063/1.531916}{10.1063/1.531916}.

\bibitem{abcd}
J.-P.~Allouche, M.~Baake, J.~Cassaigne, D.~Damanik.
\newblock Palindrome complexity.
\newblock \emph{Theoret. Comput. Sci.} 292 (2003), 9--31. DOI:\href{https://doi.org/10.1016/S0304-3975(01)00212-2}{10.1016/S0304-3975(01)00212-2}.

\bibitem{a_sh} 
J.-P.~Allouche, J.~Shallit.
\newblock \emph{Automatic Sequences: Theory, Applications, Generalizations}.
\newblock Cambridge University Press, 2003. 

\bibitem{a_sh_kreg}
J.-P.~Allouche, J.~Shallit.
\newblock The ring of $k$-regular sequences.
\newblock \emph{Theoret. Comput. Sci.} \textbf{98.2} (1992), 163--197. DOI:\href{https://doi.org/10.1016/0304-3975(92)90001-V}{10.1016/0304-3975(92)90001-V}.

\bibitem{a_sh_ubi}
J.-P.~Allouche, J.~Shallit.
\newblock The ubiquitous Prouhet-Thue-Morse sequence.
\newblock In: C. Ding, T. Helleseth and H. Niederreiter, eds., \emph{Sequences and Their Applications: Proc. SETA 1998}, Springer, 1999, pp. 1--16.

\bibitem{akmp}
P.~Ambro\v z, O.~Kadlec, Z.~Mas\'akov\'a, E.~Pelantov\'a.
\newblock Palindromic length of words and morphisms in class $\mathcal P$.
\newblock \emph{Theoret. Comp. Sci.} \textbf{780} (2019), 74--83. DOI:\href{https://doi.org/10.1016/j.tcs.2019.02.024}{10.1016/j.tcs.2019.02.024}.

\bibitem{shur2}
K.~Borozdin, D.~Kosolobov, M.~Rubinchik, A.~M. Shur.
\newblock Palindromic length in linear time.
\newblock In: J.~K\"arkk\"ainen, J.~Radoszewski, and W.~Rytter, eds., \emph{Proc. CPM 2017}, Dagstuhl Publishing, 2017, pp. 23:1--23:12.

\bibitem{crs}
E.~Charlier, N.~Rampersad, J.~Shallit.
\newblock Enumeration and decidable properties of automatic sequences, \emph{Int. J. Found. Comput. Sci.} \textbf{23} (2012), 1035--1066. DOI:\href{https://doi.org/10.1142/S0129054112400448}{10.1142/S0129054112400448}.

\bibitem{cobham72}
A.~Cobham.
\newblock Uniform tag sequences.
\newblock \emph{Math. Systems Theory} \textbf{6} (1972), 164--192. DOI:\href{https://doi.org/10.1007/BF01706087}{10.1007/BF01706087}.

\bibitem{dmss}
C.~F.~Du, H.~Mousavi, L.~Schaeffer, J.~Shallit.
\newblock Decision algorithms for Fibonacci-automatic words, III: Enumeration and abelian properties.
\newblock \emph{Int. J. Found. Comput. Sci.} \textbf{27} (2016), 943--963. DOI:\href{https://doi.org/10.1142/S0129054116500386}{10.1142/S0129054116500386}.

\bibitem{fici}
G.~Fici, T.~Gagie, J.~K\"arkk\"ainen, D.~Kempa.
\newblock A subquadratic algorithm for minimum palindromic factorization.
\newblock \emph{J. Discr. Alg.} \textbf{28} (2014), 41--48. DOI:\href{https://doi.org/10.1016/j.jda.2014.08.001}{10.1016/j.jda.2014.08.001}.

\bibitem{dlt}
A.~E.~Frid.
\newblock First lower bounds on palindromic length.
\newblock In: \emph{Developments in Language Theory: Proc. DLT 2019}, Lect. Notes in Comp. Sci., Vol. 11647, Springer, 2019, pp. 234--243. DOI:\href{https://doi.org/10.1007/978-3-030-24886-4\_17}{10.1007/978-3-030-24886-4\_17}.

\bibitem{tm}
A.~E.~Frid.
\newblock Prefix palindromic length of the Thue-Morse word.
\newblock \emph{J. Integer Seq.} V. 22 (2019), Article 19.7.8.

\bibitem{frid} 
A.~E.~Frid.
\newblock Sturmian numeration systems and decompositions to palindromes.
\newblock \emph{European J. Combin.} \textbf{71} (2018), 202--212. DOI:\href{https://doi.org/10.1016/j.ejc.2018.04.003}{10.1016/j.ejc.2018.04.003}.

\bibitem{f:num}
A.~E.~Frid.
\newblock Representations of palindromes in the Fibonacci word.
\newblock In: \emph{Proc. Numeration 2018}, 2018, pp. 9--12, \url{https://numeration2018.sciencesconf.org/data/pages/num18_abstracts.pdf}.

\bibitem{fpz}
A.~E.~Frid, S.~Puzynina, L.~Zamboni.
\newblock On palindromic factorization of words.
\newblock \emph{Adv. Appl. Math.} \textbf{50} (2013), 737--748. DOI:\href{https://doi.org/10.1016/j.aam.2013.01.002}{10.1016/j.aam.2013.01.002}.


\bibitem{li2}
S.~Li.
\newblock Palindromic length sequence of the ruler sequence and of the period-doubling sequence.
\newblock Preprint (2020). \url{https://arxiv.org/abs/2007.08317}.

\bibitem{parry}
A.~Massuir, J.~Peltomäki, M.~Rigo.
\newblock Automatic sequences based on Parry or Bertrand numeration systems.
\newblock \emph{Adv. Appl. Math.}, \textbf{108} (2019), 11---30. DOI:\href{http://dx.doi.org/10.1016/j.aam.2019.03.003}{10.1016/j.aam.2019.03.003}.

\bibitem{mss-1}
H. Mousavi, L.  Schaeffer and J. Shallit, Decision algorithms for Fibonacci-Automatic Words, I: Basic results. \emph{RAIRO Inform. Théorique} 50 (2016), 39--66.

\bibitem{shur1}
M.~Rubinchik, A.~M.~Shur.
\newblock EERTREE: An efficient data structure for processing palindromes in strings.
\newblock \emph{European J. Combin.} \textbf{68} (2018), 249--265. DOI:\href{https://doi.org/10.1016/j.ejc.2017.07.021}{10.1016/j.ejc.2017.07.021}.

\bibitem{shur3}
M.~Rubinchik, A.~M.~Shur.
\newblock Palindromic $k$-Factorization in pure linear time.
\newblock In: \emph{45th International Symposium on Mathematical Foundations of Computer Science (MFCS 2020)}, Leibniz International Proceedings in Informatics (LIPIcs), Vol.~170, Schloss Dagstuhl--Leibniz-Zentrum f{\"u}r Informatik, 2020, pp. 81:1--81:14. DOI:\href{https://doi.org/10.4230/LIPIcs.MFCS.2020.81}{10.4230/LIPIcs.MFCS.2020.81}.

\bibitem{saarela}
A.~Saarela.
\newblock Palindromic length in free monoids and free groups.
\newblock In: \emph{Proc. WORDS 2017}, Lect. Notes in Comp. Sci., Vol. 10432, Springer 2017, pp. 203--213. DOI:\href{https://doi.org/10.1007/978-3-319-66396-8\_19}{10.1007/978-3-319-66396-8\_19}.
 
\bibitem{oeis}
N. J. A. Sloane et al.
\newblock The On-Line Encyclopedia of Integer Sequences, \url{https://oeis.org}.

\bibitem{rosetta}
{\tt https://rosettacode.org/wiki/Eertree}.

\end{thebibliography}
\end{document}